\newcommand\yb       {\ensuremath{{\bf x}}}
\newcommand{\taub}   {\ensuremath{{\bm\tau}}}
\newcommand\Tset     {\ensuremath{{\mathcal{T}}}}
\newcommand\Cset{\ensuremath{{{\mathbb C}}}}
\DeclareMathOperator*{\st}{s.t.}
\DeclareMathOperator*{\E}{\mathbb{E}}
\newtheorem{lemma}{Lemma}
\newtheorem{Proposition}{Proposition}
\newtheorem{Remark}{Remark}
\definecolor{orange}{RGB}{255,107,0}
\definecolor{green} {RGB}{0,  180,80}
\begin{document}
\title{A Low-Complexity Algorithm for Throughput Maximization in Wireless Powered Communication Networks}
\author{\IEEEauthorblockN{Qian Sun, Gang Zhu, Chao Shen, Xuan Li, and Zhangdui Zhong\\}
\IEEEauthorblockA{State Key Laboratory of Rail Traffic Control and Safety, Beijing Jiaotong University, Beijing, China 100044\\
Email: \{12120137, gzhu, shenchao, 12120099, zhdzhong\}@bjtu.edu.cn}
}
\maketitle


\begin{abstract}
This paper investigates a wireless powered communication network (WPCN) under the protocol of harvest-then-transmit, where a hybrid access point with constant power supply replenishes the passive user nodes by wireless power transfer in the downlink, then each user node transmit independent information to the hybrid AP in a time division multiple access (TDMA) scheme in the uplink.
The sum-throughput maximization and min-throughput maximization problems are considered in this paper. The optimal time allocation for the sum-throughput maximization is proposed based on the Jensen's inequality, which provides more insight into the design of WPCNs.
A low-complexity fixed-point iteration algorithm for the min-throughput maximization problem, which promises a much better computation complexity than the state-of-the-art algorithm. Simulation results confirm the effectiveness of the proposed algorithm.
\end{abstract}
\IEEEpeerreviewmaketitle
\begin{IEEEkeywords}
Energy harvesting, wireless power transfer, throughput maximization, time allocation.
\end{IEEEkeywords}

\section{Introduction}\vspace{-2pt}
Energy constrained wireless communication networks need some external charging mechanism to replenish their energy and remain active \cite{Medepally}. However, replacing or recharging batteries incurs a high cost and can be inconvenient or highly undesirable (e.g., for sensors embedded in building structures \cite{Ogawa} or inside the human body  \cite{ZhangLife}).
Harvesting energy from the radio-frequency (RF) signals serves as a safe and promising option, which can be realized by {\it wireless power transfer} (WPT) \cite{Ho}. A reasonable WPT efficiency can be achieved by the state-of-the-art energy harvesting circuits \cite{Vullers}. 
 Moreover, the WPT technology has actually found its application in the
wireless sensor networks (WSNs). The interested readers are referred to \cite{Mitcheson,Le} and the references therein.

It is  important to well design the {\it wireless powered communication networks} (WPCNs) such that the wireless nodes can be efficiently powered. To this end, the optimal design of WPCNs has drawn a lot of attention in recent years.
Since the power and information can be drawn from the radio signals at the same time, some research works were carried out on the hot topic of {\it simultaneous wireless information and power transfer} (SWIPT), e.g., \cite{Varshney,Grover,Ho,ShiQJ2013,Lidd2013,ChaoShen2013}. It was shown that there exist nontrivial tradeoffs between the WPT and {\it wireless information transmission} (WIT) efficiency under the single-input single-out (SISO) flat fading channel \cite{Varshney}, and the frequency-selective fading channel \cite{Grover}, as well as the multi-input multi-output flat fading channel \cite{Ho}.
The optimal beamforming designs for SWIPT were studied under the two-way relay channel \cite{Lidd2013} and the MISO interference channel \cite{ShiQJ2013,ChaoShen2013}. Besides, it is also  important to improve the WPCN system performance from the network perspective \cite{HuangKB2012,Lee2012,Shi,Ju}.

In this paper, we reconsider the scenario in \cite{Ju}, where a harvest-then-transmit protocol was  proposed for a point-to-multipoint network and two system utilities were taken into account, i.e., the sum-throughput maximization and max-min throughput.
Motivated by the fact that the circuit power dissipation is not negligible especially for the WPT based nodes \cite{CircuitPower}, it is hence very important to develop fast and simple enough algorithms for WPCNs.
The contribution of this paper is to propose a low-complexity algorithm for the optimal time allocation of the system in \cite{Ju}, such that less energy harvested by WPT will be required for the the baseband processing. This work also provides more insight into the optimal time allocation for the  system considered in \cite{Ju}.

This rest of the paper is organized as follows. The system model is described in Sec. II. Then in Sec. III we propose two fast optimal time allocation algorithms for the sum-throughput maximization and the min-throughput maximization, respectively. Sec. IV presents the simulation results for comparison with the algorithm in \cite{Ju}. Finally, Sec. V concludes the paper.

\section{System Model}
This paper considers a SISO WPCN scenario \cite{Ju}, as shown in Fig. \ref{fig:1},  which consists of a single-antenna hybrid AP (HAP)  and $K$ single-antenna user nodes, denoted by U$_i$ for $i\!=\!1,\ldots,K$.
The network operates in a time division multiple access (TDMA) fashion. Assume that the frame duration is normalized to be 1. 
At the first $\tau_0\in[0,1]$ fraction of time, the HAP transfers the power to the $K$ user nodes in the downlink (DL).
{Then in the uplink (UL), U$_i$ sends its independent information to the HAP within $\tau_i\in[0,1]$ fraction of time by the energy harvested at the initial slot, for $i=1,\ldots,K$.} Here, the total time constraint reads
\begin{align}
    \sum_{i=0}^K \tau_i\leq 1.\label{P01:a}
\end{align}

\begin{figure}[!t]
  \centering
  \includegraphics[width=.55\linewidth]{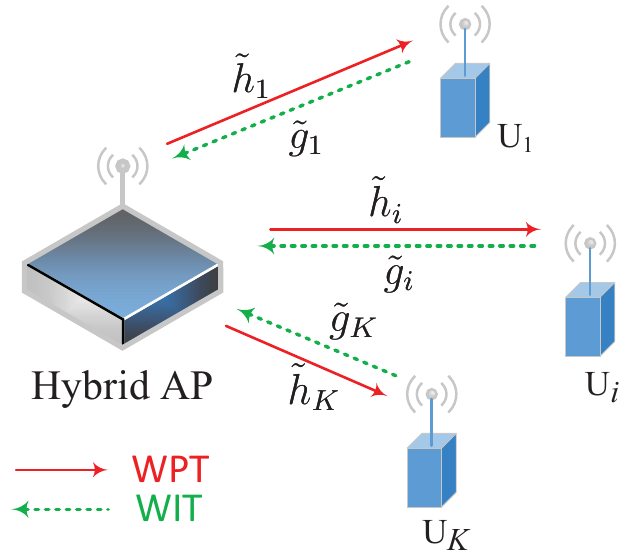}
  \caption{A wireless powered wireless network with wireless power transfer (WPT) in the downlink and wireless information transmissions (WITs) in the uplink.}\label{fig:1}\vspace{-10pt}
\end{figure}
{Assume that both DL and UL channels, denoted respectively by $\tilde h_i, \tilde g_i\in\Cset$, are quasi-static flat-fading. The HAP has perfect knowledge of all channel state information.}
During the DL WPT phase, the transmitted WPT signal from the HAP is denoted by $x_0\in\Cset$, which is subject to the average power limit, i.e., $\E [|x_0|^2]\leq P_{\max}$. Thus, the energy harvested at U$_i$ can be expressed by
\begin{align}
    E_i =\tau_0 \xi_i \E[|\tilde h_i x_0|^2]\leq \tau_0 \xi_i h_i P_{\max},\label{P03:a}
\end{align}
where $\xi_i\in(0,1)$ is the energy harvesting efficiency at U$_i$, and $h_i\triangleq |\tilde h_i|^2$, for  $i = 1,\cdots, K$.

In the subsequent UL WIT phase, all nodes transmit their independent information to the HAP with the energy harvested in the DL WPT phase, assuming that all the energy harvested is used.
Let the WIT signal transmitted by U$_i$ be $x_i\sim \mathcal{CN} (0,P_i)$. Then its average power is limited by
\begin{align}
    P_i = \E[|x_i|^2] = \frac{E_i}{\tau_i},~i = 1,\cdots, K,\label{P04:a}
\end{align}
and the received signal $ y_i$ at the HAP in the $i$th UL slot is 
   \begin{align}
     y_i =\tilde g_i x_i+z_i, i = 1,\cdots, K,\label{P05:a}
  \end{align}
where $z_i\sim \mathcal{CN}(0,\sigma_i^2)$ represents the additive Gaussian noise at the HAP. Therefore, the achievable throughput of $U_i$ in bits/second/Hz (bps/Hz) can be expressed as
\begin{align}\!
    R_i(\tau_0,\tau_i)=\tau_i\log_2\left(1+\frac{g_iP_i}{\Gamma\sigma_i^2}\right)
    \leq\tau_i\log_2\left(1+\gamma_i\frac{\tau_0}{\tau_i}\right),\label{P06:a}
\end{align}
where $g_i\triangleq|\tilde g_i|^2$, $\gamma_i =\frac{\xi_i h_ig_iP_{\max}}{\Gamma\sigma_i^2}$, for $i\! = \!1,\!\ldots,\! K$,  $\taub\!\triangleq\![\tau_0,\ldots,\tau_K]^T$ and $\Gamma$ denotes the signal-to-noise ratio (SNR) gap due to a practical modulation and coding scheme used.

\section{Problem Formulation}
\subsection{Time Allocation for Sum-Throughput Maximization}
In this subsection, let us focus on the optimal time allocation with the objective of sum-throughput maximization. From \eqref{P06:a}, the  problem can be formulated as
  \begin{subequations}\label{P1}
  \begin{align}
    \max_{{\bm\tau}}~&~\sum_{i = 1}^K R_i(\tau_0,\tau_i)\label{P1:a}\\[-3pt]
   \st~&R_i(\tau_0,\tau_i)\leq\tau_i\log_2\left(\!1+\gamma_i\frac{\tau_0}{\tau_i}\!\right)\!,\forall i=1,\ldots, K,\label{P1:a2}\\[-3pt]
        &\tau_i\geq 0, ~\forall i=0,\cdots, K, \label{P1:b}\\[-3pt]
        &\sum_{i = 0}^K \tau_i\leq 1.\label{P1:c}
  \end{align}
\end{subequations}
which admits a semi-closed form solution as stated below\vspace{-6pt}
\begin{lemma}{\rm\bf(\cite[{Proposition 3.1}]{Ju})}\label{Lemma1}
  The problem \eqref{P1} is convex and the optimal time allocation is given by
    \begin{subnumcases}{\tau_i^*=}
        \frac{z^*-1}{A+z^*-1},& $i=0$,\\
        \frac{\gamma_i}{A+z^*-1},& $i=1,\cdots K$,
    \end{subnumcases}
    where $A =\sum_{i = 1}^K \gamma_i > 0$, and $z^*>1$ is the solution to $z\ln z- z + 1 = A$.
\end{lemma}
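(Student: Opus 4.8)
The plan is to first establish convexity by recognizing the right-hand side of \eqref{P1:a2} as a perspective function. Writing $\phi_i(x)=\log_2(1+\gamma_i x)$, which is concave and increasing, the map $(\tau_0,\tau_i)\mapsto \tau_i\,\phi_i(\tau_0/\tau_i)=\tau_i\log_2(1+\gamma_i\tau_0/\tau_i)$ is exactly the perspective of $\phi_i$ and is therefore jointly concave on the region $\tau_i>0$. Consequently the constraint $R_i-\tau_i\log_2(1+\gamma_i\tau_0/\tau_i)\le 0$ is convex, and since the objective $\sum_i R_i$ together with \eqref{P1:b}--\eqref{P1:c} are linear, problem \eqref{P1} is convex. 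Because the objective is increasing in each $R_i$ and in the total allotted time, at any optimum every throughput constraint \eqref{P1:a2} and the sum-time constraint \eqref{P1:c} hold with equality; moreover the optimal allocation is strictly positive, so \eqref{P1:b} is inactive. I may therefore eliminate the $R_i$ and work with the reduced problem of maximizing $\sum_{i}\tau_i\log_2(1+\gamma_i\tau_0/\tau_i)$ subject to $\sum_{i=0}^K\tau_i=1$, $\tau_i\ge0$.

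Next I would form the Lagrangian with multiplier $\lambda$ for the sum-time equality and apply the KKT stationarity conditions (Slater's condition holds, so they are necessary and sufficient). Introducing the shorthand $u_i\triangleq \gamma_i\tau_0/\tau_i$, differentiation with respect to $\tau_i$ for $i\ge1$ yields, after simplification, $\ln(1+u_i)-\tfrac{u_i}{1+u_i}=\lambda\ln 2$, while differentiation with respect to $\tau_0$ gives $\sum_{i=1}^K\tfrac{\gamma_i}{1+u_i}=\lambda\ln2$. The key observation, which is the crux of the argument, is that the scalar function $t\mapsto \ln(1+t)-t/(1+t)$ has derivative $t/(1+t)^2>0$ and is hence strictly increasing; since its value equals the common constant $\lambda\ln2$ for every $i$, all the $u_i$ must coincide. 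Denoting this common value by $u$ and setting $z\triangleq 1+u>1$, the per-user receive-SNR $u_i=\gamma_i\tau_0/\tau_i$ is thereby equalized across users.

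Finally I would combine the two stationarity relations. With $u_i\equiv z-1$, the $\tau_0$-equation becomes $A/z=\lambda\ln2$ where $A=\sum_i\gamma_i$, and the $\tau_i$-equation becomes $\ln z-(z-1)/z=\lambda\ln2$; eliminating $\lambda$ and multiplying through by $z$ gives precisely $z\ln z-z+1=A$. The equalization $\gamma_i\tau_0/\tau_i=z-1$ yields $\tau_i=\gamma_i\tau_0/(z-1)$, and substituting into $\tau_0+\sum_i\tau_i=1$ gives $\tau_0(A+z-1)/(z-1)=1$, hence the claimed $\tau_0^*=(z^*-1)/(A+z^*-1)$ and $\tau_i^*=\gamma_i/(A+z^*-1)$. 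To close the argument I would verify existence and uniqueness of $z^*>1$: setting $\varphi(z)=z\ln z-z+1$ one has $\varphi(1)=0$, $\varphi'(z)=\ln z>0$ on $(1,\infty)$, and $\varphi(z)\to\infty$, so for any $A>0$ there is a unique root $z^*>1$. The main obstacle is the equalization step in the second paragraph; once the common value $z-1$ is identified, everything else is bookkeeping.
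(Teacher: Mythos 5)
Your proof is correct, but it follows a different route from the one this paper uses to establish Lemma \ref{Lemma1}. You share the paper's first step: both arguments obtain joint concavity by recognizing $\tau_i\log_2(1+\gamma_i\tau_0/\tau_i)$ as the perspective of the concave function $\log_2(1+\gamma_i x)$ (the paper notes this precisely to avoid the Hessian computation of the cited reference). Where you diverge is in how the equal-SNR structure is obtained. You form the Lagrangian of the reduced problem and derive equalization of $u_i=\gamma_i\tau_0/\tau_i$ from stationarity plus the strict monotonicity of $t\mapsto\ln(1+t)-t/(1+t)$, then eliminate the multiplier to arrive at $z\ln z-z+1=A$; this essentially reconstructs the KKT derivation of \cite[Proposition 3.1]{Ju}, though your monotonicity argument for why all $u_i$ coincide is cleaner than solving the stationarity system directly. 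The paper instead fixes $\tau_0$, applies Jensen's inequality to the inner maximization over $\{\tau_i\}$ with $\sum_{i=1}^K\tau_i=1-\tau_0$ to conclude $\gamma_i\tau_0/\tau_i^*$ must be a common constant, substitutes $\tau_i^*=\gamma_i(1-\tau_0)/A$ to reduce everything to the scalar concave problem $\max_{\tau_0}(1-\tau_0)\log\bigl(1+A\tau_0/(1-\tau_0)\bigr)$, and then settles for golden section search rather than deriving the defining equation for $z^*$. The trade-off: your approach delivers the full semi-closed form of the lemma, including existence and uniqueness of $z^*>1$ (which the paper leaves to the citation), whereas the paper's multiplier-free Jensen argument buys structural insight (proportional time shares, equal SNR) that it later reuses for the min-throughput problem and claims extends to joint beamforming design. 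Both are complete and valid; yours is the more self-contained proof of the stated closed form.
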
\vspace{-4pt}

Notice that in Lemma \ref{Lemma1} the concavity of \eqref{P1} is proved by the negative semidefiniteness of the Hessian of $R_i(\taub)$, while the optimal time allocation is obtained from the Karush-Kuhn-Tucker (KKT) conditions of \eqref{P1}. However, we would like to remark that Lemma \ref{Lemma1} can indeed be established in a simple but more insightful way, as detailed below.

Firstly, it can be readily shown that the inequality constraints \eqref{P1:a2} and \eqref{P1:c} hold with equality at the optimal solution. So the function $R_i(\tau_0,\tau_i)\triangleq\tau_i\log_2\big(\!1+\gamma_i\frac{\tau_0}{\tau_i}\!\big)$ can be regarded as the perspective function of the concave function $\log_2(1+\gamma_i\tau_0)$, which demonstrates the concavity of $R_i(\taub)$ with respect to (w.r.t.) $\{\tau_0,\tau_i\}$.

Then, a semi-closed form solution for the optimal time allocation can be found by the Jensen's inequality. To show this, let us rewrite the problem \eqref{P1} as
  \begin{subequations}\label{P12}
  \begin{align}
    \max_{0\le\tau_0\leq 1}~f(\tau_0)\triangleq\max_{\tau_1,\ldots,\tau_K}&~\sum_{i = 1}^K \tau_i\log_2\left(\!1+\gamma_i\frac{\tau_0}{\tau_i}\!\right)\label{P12:a}\\[-3pt]
   \st~&\tau_i\geq 0, ~\forall i=1,\cdots, K, \\[-3pt]
        &\sum_{i = 1}^K \tau_i= 1-\tau_0.
  \end{align}
\end{subequations}

Due to the strict concavity of \eqref{P12:a} and thanks to the Jensen's inequality, for any given $\tau_0\in[0,1]$ the optimal $\{\tau_k^*\}_{k=1}^K$ is attained if and only if
   \begin{align}
    \gamma_i\frac{\tau_0}{\tau_i^*}={\rm SNR},~\forall i=1,\ldots,K,~{\rm and}~\sum_{i = 1}^K\tau_i^* = 1-\tau_0\label{P21:a}
  \end{align}
which yields
   \begin{align}
    {\rm SNR}=\frac{A\tau_0}{1-\tau_0},~\tau_i^*= \frac{\gamma_i}{A}(1-\tau_0).\label{P22:a}
  \end{align}
Substituting (\ref{P22:a}) into the sum throughput $f(\tau_0)$, $\tau_0$ can thus be chosen by solving the following problem
  \begin{align}
    \max_{{0\le\tau_0\leq 1}}~(1-\tau_0)\log\Big(1+ \frac{A\tau_0}{1-\tau_0}\Big),
  \end{align}
which is strictly concave and thus the unique maximizer can be fast found by golden section search.\vspace{-8pt}

\begin{Remark}
  The time fraction for each WIT slot is directly proportional to $\{\gamma_i\}_{i=1}^K$, and equal SNR is achieved at the optimal solution.
  The proposed approach, shedding some light on the optimal design of WPCN, can actually be further extended to the scenario for joint beamforming design and time allocation \cite{Shen2014}.
\end{Remark}

\subsection{Time Allocation for Min-Throughput Maximization}
In the previous subsection, the same SNR is achieved at the optimal solution. In view of the user fairness,
the following min-throughput maximization problem is considered\vspace{-2pt}
\begin{subequations}\label{P:SISO:maxmin}
  \begin{align}
    \max_{\{\tau_i\}_{i=0}^K}~&\min_{i=1,\ldots,K}~R_i\left(\tau_0,\tau_i\right)\label{P4:a}\\[-8pt]
    \st~&~0\leq\tau_i\leq 1,~\forall i=0,\ldots,K,~\sum_{i=0}^K\tau_i\leq 1.\label{P4:c}\vspace{-5pt}
  \end{align}
\end{subequations}

For convenience, we redefine $\taub\!\triangleq\![\tau_1,\ldots,\tau_K]^T$ and denote
\begin{align}\!\!\!\!\!
    \Tset(\tau_0) \!\triangleq \!\left\{\taub\left|\tau_i\in[0,1],\,\forall i=1,\ldots,K,\,{\bf 1}^T\taub\leq 1 - \tau_0\right.\right\}.\notag
\end{align}
Thus the problem \eqref{P:SISO:maxmin} can be alternatively expressed as
  \begin{align}\label{Problem14}
    \max_{\tau_0\in[0,1]}~g(\tau_0)\triangleq \max_{\taub\in\Tset(\tau_0)}&~\bar g(\tau_0,\taub), 
  \end{align}
where $\bar g(\tau_0,\taub)=\min\limits_{i=1,\ldots,K}~R_i\left(\tau_0,\tau_i\right)$.

It can be observed that $g(0)=g(1)=0$. However, what we are interested in is to show the following lemma:\vspace{-7pt}
\begin{Proposition}\label{Lemma::1}
  $g(\tau_0)$ is strictly concave w.r.t. $\tau_0\in[0,1]$.
\end{Proposition}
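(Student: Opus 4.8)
The plan is to exploit the degree-one homogeneity of each throughput function and to recast the inner maximization as a single convex sublevel-set condition, so that the graph of $g$ becomes a level curve whose strict convexity yields the claim.

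First I would record the structural properties of $R_i(\tau_0,\tau_i)=\tau_i\log_2(1+\gamma_i\tau_0/\tau_i)$. As already observed, $R_i$ is the perspective of the strictly concave map $\tau_0\mapsto\log_2(1+\gamma_i\tau_0)$, hence jointly concave on $\{\tau_0\ge0,\tau_i>0\}$; moreover it is positively homogeneous of degree one, and a direct Hessian computation shows its second differential is a rank-one negative semidefinite form whose only null direction is the radial one $(\tau_0,\tau_i)$. Thus $R_i$ is strictly concave in every non-radial direction, i.e. off the rays through the origin. I would also verify that $R_i$ is strictly increasing in $\tau_i$ for $\tau_0>0$ (equivalently $\ln(1+u)>u/(1+u)$ for $u>0$).

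Next I would introduce the inverse-time function $\rho_i(\tau_0,t)$, the unique $\tau_i$ solving $R_i(\tau_0,\tau_i)=t$ on the cone $\{0\le t<\gamma_i\tau_0/\ln 2\}$. Because the superlevel set $\{R_i\ge t\}$ is convex and $R_i$ is increasing in $\tau_i$, a standard argument gives convexity of $\rho_i$ in $(\tau_0,t)$; homogeneity of $R_i$ transfers to $\rho_i(c\tau_0,ct)=c\,\rho_i(\tau_0,t)$; and strict concavity of $R_i$ off rays upgrades to strict convexity of $\rho_i$ off rays. Setting $\Phi(\tau_0,t)=\tau_0+\sum_{i=1}^K\rho_i(\tau_0,t)$ then produces a convex, degree-one homogeneous function that is strictly convex off rays. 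Since attaining $R_i\ge t$ costs at least $\rho_i(\tau_0,t)$ of uplink time on link $i$, feasibility in $\Tset(\tau_0)$ is precisely $\Phi(\tau_0,t)\le1$, so $g(\tau_0)=\max\{t\ge0:\Phi(\tau_0,t)\le1\}$; the hypograph of $g$ coincides with the convex set $\{\Phi\le1\}$ and its graph with the level curve $\{\Phi=1\}$. This reformulation also sidesteps having to argue separately that the optimal rates are equalized or that the time budget is used in full.

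Finally I would deduce strict concavity from strict convexity of this level curve. By homogeneity $\Phi$ is strictly increasing along each ray, so every ray meets $\{\Phi=1\}$ at most once; hence any two distinct points of the curve lie on distinct rays, where strict convexity of $\Phi$ applies and forces their midpoint strictly inside $\{\Phi\le1\}$. Therefore $\{\Phi=1\}$ contains no line segment, which is exactly strict concavity of $g$ on $(0,1)$; the boundary values $g(0)=g(1)=0$ are then absorbed by noting that an affine piece of $g$ on any subinterval of $[0,1]$ would restrict to an affine piece inside $(0,1)$, a contradiction. The main obstacle — and the crux of the argument — is precisely the homogeneity-induced degeneracy: each $R_i$, and hence $\Phi$, is only semidefinite rather than definite, so strictness is lost along radial directions. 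The resolution is the observation that this degeneracy lives exactly on rays, while distinct points of a homogeneous level set never share a ray, so strict concavity is recovered on the very curve that defines $g$.
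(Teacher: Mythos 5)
Your proof is correct, but it takes a genuinely different route from the paper's. The paper argues directly on the primal side: it asserts that each $R_i$, hence $\bar g=\min_i R_i$, is \emph{strictly} jointly concave, and then applies the standard partial-maximization argument --- given maximizers $\yb_1\in\Tset(t_1)$ and $\yb_2\in\Tset(t_2)$, the combination $\theta\yb_1+(1-\theta)\yb_2$ lies in $\Tset(\theta t_1+(1-\theta)t_2)$, and strict concavity of $\bar g$ supplies the strict inequality. You instead invert each throughput into a time cost $\rho_i(\tau_0,t)$, assemble the positively homogeneous convex function $\Phi(\tau_0,t)=\tau_0+\sum_i\rho_i(\tau_0,t)$, identify the graph of $g$ with the level set $\{\Phi=1\}$, and extract strictness from the fact that distinct points of a level set of a homogeneous function lie on distinct rays. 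Notably, your route is more rigorous precisely where the paper's proof is shaky: $R_i$ is a perspective function, hence \emph{linear} along rays through the origin, so the paper's premise that $R_i$ (and $\bar g$) is strictly jointly concave is literally false, and its strict-inequality step would fail if the two optimizing points were radially proportional. The paper's argument is repairable --- optimal allocations saturate the budget $\tau_0+{\bf 1}^T\taub=1$, and two distinct points on that hyperplane are never proportional, so the off-ray strictness you isolate applies there as well --- but you are the one who names and resolves this degeneracy. The trade-off: the paper's argument is shorter and purely primal; yours needs the $\rho_i$ machinery but buys the characterization $g(\tau_0)=\max\{t\ge 0:\Phi(\tau_0,t)\le 1\}$, which dovetails nicely with the paper's own algorithm, since the fixed point in Lemma \ref{Lemma2} computes exactly $\rho_i(\tau_0,\bar R)$ and the bisection over $\bar R$ is just the search for the root of $\Phi(\tau_0,\cdot)=1$.
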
\vspace{-4pt}
\begin{proof}
Please refer to the Appendix.
\end{proof}

Now let us focus on the time allocation for UL WITs with any given $\tau_0\in[0,1]$. By introducing a slack variable $\bar R$, the problem associated with $g(\tau_0)$ is equivalent to
\begin{subequations}\label{Problem19}
\begin{align}
   \max_{\taub,\,\bar R}~&\bar R\\[-3pt]
   \st~~&\tau_i\log_2\big(1+\gamma_i\tfrac{\tau_0}{\tau_i}\big)\geq \bar R,~\forall {i=1,\ldots,K},\label{Eq19b}\\[-2pt]
        & \tau_i\in[0,1],~\forall {i=1,\ldots,K},\\[-2pt]
        & {\bf 1}^T\taub\leq 1-\tau_0,\label{Eq19d}
\end{align}
\end{subequations}
which is convex and can thus be solved by, e.g., the subgradient approach in \cite{Ju}.

But we are more interested in a fast fixed-point iteration algorithm in this paper. To this end, we notice that the throughput constraints in \eqref{Eq19b} and the time constraint \eqref{Eq19d} are active at the optimum, that can be easily verified by contradiction. Besides, the problem \eqref{Problem19} can be solved by bisection search over $\bar R$, and thereby checking the feasibility for the given $\bar R$, which is given by
\begin{subequations}\label{Problem20}
\begin{align}
   {\rm find} &~~~\taub\\
   \st  &~~ \tau_i\log_2\big(1+\gamma_i\tfrac{\tau_0}{\tau_i}\big)= \bar R,~\forall {i=1,\ldots,K},\label{Eq20b}\\
        &~~ \tau_i\in[0,1],~\forall {i=1,\ldots,K},~~ {\bf 1}^T\taub= 1-\tau_0.\label{Eq20d}
\end{align}
\end{subequations}
\vspace{-20pt}

%
%
\begin{lemma}\label{Lemma2}
For any given $\tau_0$ and $\bar R$, \eqref{Eq20b} gives rise to the unique solution for $\tau_i^*$, which can be obtained by the following fixed-point iteration with linear convergence rate:
\begin{align}\label{fixedpoint}
  \tau_i[n]=\frac{ \bar R}{\log_2\big(1+\gamma_i{\tau_0}/{\tau_i[n]}\big)},~\forall i=1,\ldots,K,
\end{align}
where $n$ is the iteration index.
\end{lemma}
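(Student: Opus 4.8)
The plan is to treat \eqref{Eq20b} as a scalar equation in $\tau_i$ for each index $i$ separately (with $\tau_0$ and $\bar R$ held fixed), first establishing existence and uniqueness of its positive root by a monotonicity argument, and then showing that the iteration \eqref{fixedpoint} is a \emph{locally} contractive map whose unique fixed point is exactly that root, with asymptotic rate strictly below one.

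First I would fix $i$, abbreviate $c\triangleq\gamma_i\tau_0>0$, and study $\phi(\tau)\triangleq\tau\log_2(1+c/\tau)$ on $(0,\infty)$. Differentiating gives $\phi'(\tau)=\frac{1}{\ln 2}\big[\ln(1+c/\tau)-\frac{c}{\tau+c}\big]$, and the elementary inequality $\ln(1+u)>\frac{u}{1+u}$ for $u>0$ (applied with $u=c/\tau$) forces $\phi'>0$, so $\phi$ is strictly increasing. Since $\phi(\tau)\to 0$ as $\tau\to 0^+$ and $\phi(\tau)\to c/\ln 2$ as $\tau\to\infty$, the equation $\phi(\tau)=\bar R$ admits exactly one positive solution $\tau_i^*$ whenever $\bar R\in(0,\,\gamma_i\tau_0/\ln 2)$, which settles uniqueness.

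Next I would recognize the map $T(\tau)\triangleq\bar R/\log_2(1+c/\tau)$ that defines \eqref{fixedpoint}: a fixed point $\tau=T(\tau)$ is equivalent to $\phi(\tau)=\bar R$, so $\tau_i^*$ is the unique fixed point, and $T$ is moreover monotonically increasing (since $\log_2(1+c/\tau)$ decreases in $\tau$), which already guarantees a monotone approach to $\tau_i^*$. To obtain the rate I would differentiate, reaching $T'(\tau)=\frac{\bar R\,c}{\ln 2\,[\log_2(1+c/\tau)]^2\,\tau(\tau+c)}$, and evaluate at the fixed point using $\bar R=\tau_i^*\log_2(1+c/\tau_i^*)$. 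The crucial simplification is that, writing $v\triangleq(\tau_i^*+c)/\tau_i^*>1$, all of this collapses to the clean scalar $T'(\tau_i^*)=\frac{v-1}{v\ln v}$.

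Finally, the contraction bound follows from the inequality $\ln v>1-1/v$ for $v>1$, which yields $v\ln v>v-1$ and hence $0<T'(\tau_i^*)<1$; by standard local fixed-point theory this gives linear (geometric) convergence with asymptotic factor $T'(\tau_i^*)$. I expect the main obstacle to be the algebraic reduction of $T'(\tau_i^*)$ to the form $\frac{v-1}{v\ln v}$: carrying that out transparently is precisely what reduces the convergence claim to the single inequality $\ln v>1-1/v$, so the bookkeeping in that reduction is where care is needed, whereas the two inequalities invoked are each one-line convexity arguments. A minor caveat worth noting is that $T'(\tau)\to\infty$ as $\tau\to 0^+$, so $T$ is not a global contraction; the rate statement is therefore inherently local/asymptotic, and global convergence is instead secured by the monotonicity of $T$ together with the uniqueness of the fixed point.
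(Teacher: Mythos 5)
Your proof is correct, but it takes a genuinely different route from the paper's own argument, which is a one-line sketch: the paper asserts that the iteration map $f(x)=\bar R/\log_2\left(1+\gamma\tau_0/x\right)$ is Lipschitz continuous with Lipschitz constant $L<1$ and appeals to the contraction-mapping principle, omitting all details. You instead (i) obtain existence and uniqueness from the strict monotonicity of $\phi(\tau)=\tau\log_2(1+c/\tau)$ together with its range $(0,c/\ln 2)$, (ii) compute the derivative of the iteration map at the fixed point, $T'(\tau_i^*)=(v-1)/(v\ln v)\in(0,1)$, which gives the linear rate locally, and (iii) secure global convergence by the monotonicity of $T$ plus uniqueness of the fixed point. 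Beyond being more self-contained, your route repairs a real weakness in the paper's sketch: as you observe, $T'(\tau)\to\infty$ as $\tau\to 0^+$, so $T$ is \emph{not} a contraction on the natural domain $(0,1]$, and the paper's global Lipschitz claim can only be true after restricting to an invariant interval bounded away from zero --- a restriction the paper never states. Your monotone-iteration argument needs no such restriction, and it also pins down the precise solvability condition $\bar R\in(0,\gamma_i\tau_0/\ln 2)$, which the lemma's phrasing (``for any given $\tau_0$ and $\bar R$'') glosses over. What the paper's approach would buy, were the contraction constant actually established on a suitable invariant domain, is a uniform geometric rate from any starting point rather than only an asymptotic one; your proof delivers global convergence plus the asymptotic linear factor $T'(\tau_i^*)$, which is all the lemma claims.
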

\begin{proof}
  It can be proved that the function $f(x)=\frac{ \bar R}{\log_2\left(1+\gamma{\tau_0}/{x}\right)}$ is Lipschitz continuous with Lipschitz constant $L<1$, then $f(x)$ has precisely one fixed point. The detailed proof is omitted due to space limitations.
\end{proof}

From Lemma \ref{Lemma2}, the feasibility problem \eqref{Problem20} is equivalent to check  whether $\taub^*\in\Tset(\tau_0)$ or not, where $\taub^*=[\tau_1^*,\ldots,\tau_K^*]^T$.
Furthermore, the optimal WPT time fraction $\tau_0$ can be obtained uniquely by golden section search based on the problem \eqref{Problem14}. The global optimum can be attained due to the strict concavity of $g(\tau_0)$ as stated in Lemma \ref{Lemma::1}.

To summarize, a fast time allocation algorithm for the problem \eqref{P:SISO:maxmin} is detailed in Algorithm \ref{Alg1}.

\begin{algorithm}[!b]
  \caption{A fast time allocation for the  problem \eqref{P:SISO:maxmin}}
  \label{Alg1}\small
\begin{algorithmic}[1]
\STATE Initialize $\tau_0,\ldots,\tau_K$, and $\epsilon>0$;
\REPEAT
\STATE Initialize $R_{\rm{min}} = 0$, $R_{\rm{max}} > \bar{R}^{*}$;
\REPEAT
\STATE $\bar R = \frac{1}{2} \left( R_{\min} + R_{\max} \right)$;
\STATE Compute $\{\tau_i^*\}_{i=1}^K$ by \eqref{fixedpoint};
\IF{$\sum_{i=1}^{K}{\tau_i^*}>1-\tau_0+\epsilon$}
\STATE ${R_{\max}} \leftarrow\bar R$;
\ELSIF{$\sum_{i=1}^{K}{\tau_i^*}<1-\tau_0+\epsilon$}
\STATE ${R_{\min}} \leftarrow\bar R$;
\ENDIF
\UNTIL{$\big|1-\tau_0-\sum_{i=1}^{K}{\tau_i^*}\big|\leq \epsilon$};
\STATE Update $\tau_0$ by golden section search;
\UNTIL {$\tau_0$ converges.}
\end{algorithmic}
\end{algorithm}

\section{Numerical Results And Discussion}
In this section, we will show the performance of the proposed algorithm for \eqref{P:SISO:maxmin} and compare it with the algorithm in \cite{Ju}.
The parameters are taken from \cite{Ju}, where the channel power gains are modeled as $h_i = g_i = 10^{-3}\rho^2_id^{-2}_i$ with
$\rho_i$ being the standard Rayleigh fading and $d_i$ being the distance between the HAP and U$_i$, while $\sigma_i^2=-100$ dBm and $\xi_i=0.5$, for $i = 1,\cdots, K$. Let the SNR gap $\Gamma = 9.8$ dB.
The same throughput accuracy $10^{-3}$ is adopted in the simulations.

Fig. \ref{fig:3} shows the average iteration time vs. the number of WIT nodes, where $d_i$ is uniformly distributed over $[5,20]$, $P_{\max}=5$ dBm and the results are averaged over 1000 random realizations. It turns out that 
the proposed algorithm 
is much faster than the algorithm in \cite{Ju}, especially when the number of WIT nodes increases.


\begin{figure}[!t]\centering
  \includegraphics[width=.7\linewidth]{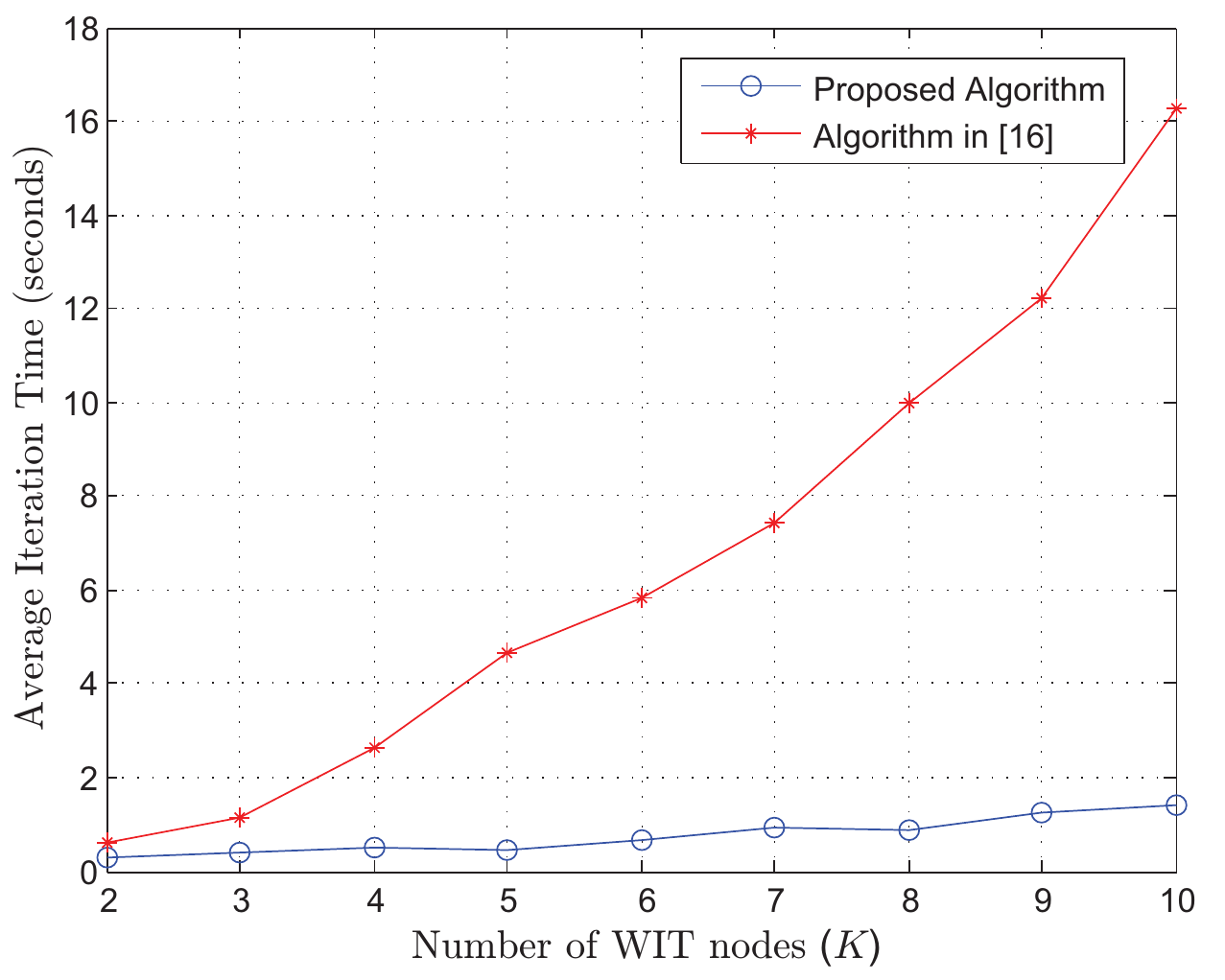}
 \caption{Average iteration time vs. number of WIT nodes.}\label{fig:3}
 \end{figure}
\begin{figure}[!t]\centering
  \includegraphics[width=.7\linewidth]{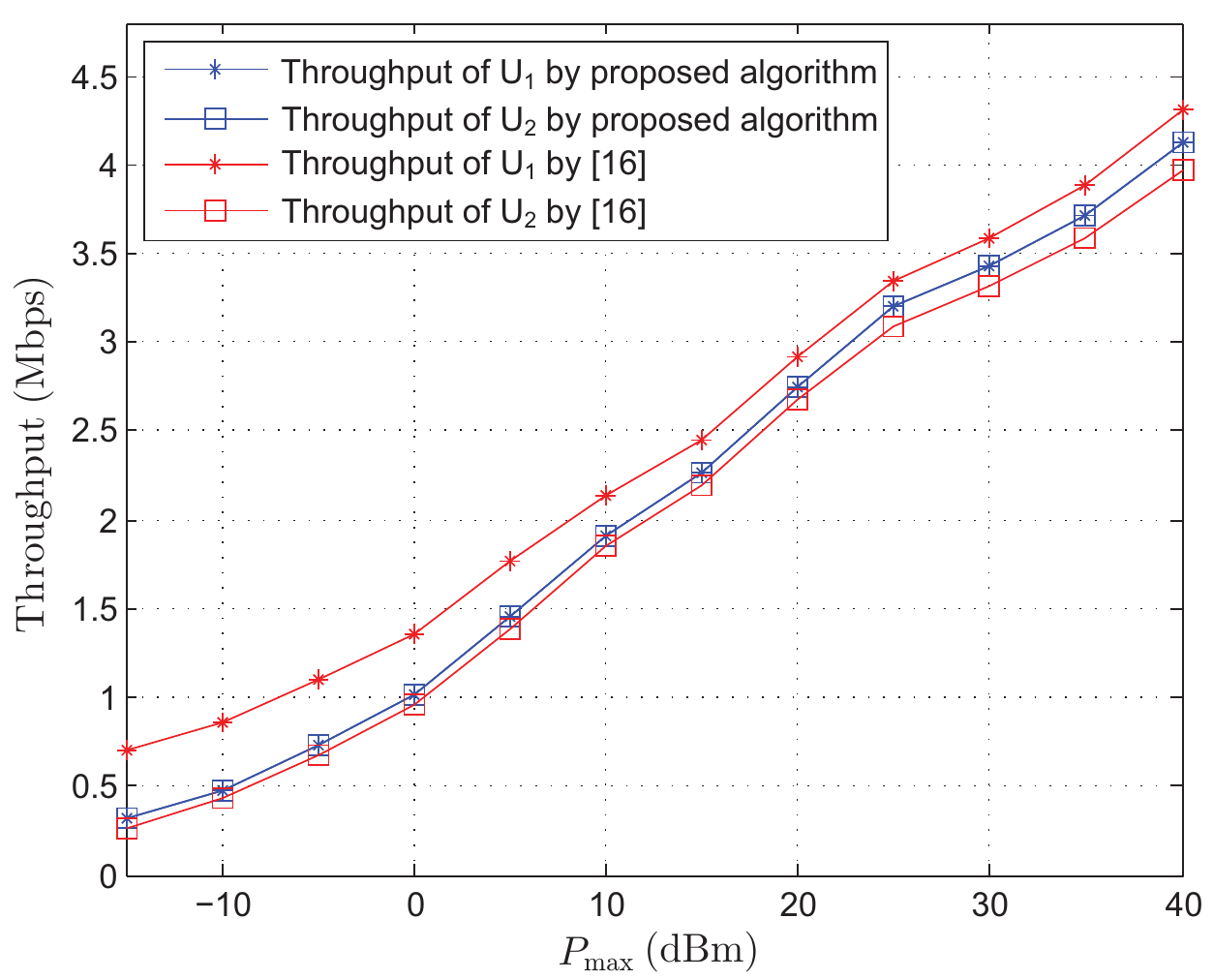}
  \caption{Individual throughput vs. the transmission power of the HAP.}\label{fig:5}
\end{figure}

In order to demonstrate the user fairness, in Fig. \ref{fig:5} we show the individual throughput vs. the power limit $P_{\max}$ with $K=2$ WIT nodes. The results are averaged over 1000  realizations with fixed $d_1=5$ m and $d_2=10$ m.
It can be seen that equal throughput is achieved by the proposed algorithm as expected. But there exists a throughput deviation between $R_1$ and $R_2$ by the algorithm in \cite{Ju}, which is contradicted with the previous analysis and may be caused by the iteration accuracy of the ellipsoid method in  \cite{Ju}.

\section{Conclusion}
In this paper we consider the time allocation for wireless powered communication networks. An optimal time allocation algorithm is proposed for the sum-throughput maximization problem based on the Jensen's inequality. Then we propose a low-complexity  algorithm for the min-throughput maximization problem, which promises a much better computation complexity than the state-of-the-art algorithm. Simulation results confirm the effectiveness of the proposed algorithm.


\section*{Acknowledgment}
{\small
This work is supported by
the China Postdoctoral Science Foundation (No. 2013M530519),
the Key Projects of State Key Lab of Rail Traffic Control and Safety (No. RCS2012ZZ004),
the Natural Science Foundation of China (No. U1334202)
and the Key Grant Project of Chinese Ministry of Education (No. 313006).
Chao Shen is the corresponding author.}

\vspace{-2pt}\appendix
{\small
\begin{proof}[Proof of Lemma \ref{Lemma::1}]
Firstly, due to the fact that $R_i(\tau_0,\tau_i)$ is strictly concave w.r.t. $\{\tau_0,\tau_i\}$, the pointwise minimum function $\bar g(\tau_0,\taub)$ is also strictly concave w.r.t. $\{\tau_0,\taub\}$.

Secondly, we will prove the strict concavity of $g(\tau_0)$ over $[0,\,1]$. To show this, we need to prove that for any $t_1\in[0,1]$, $t_2\in [0,1]$, and $\theta\in (0,1)$, it admit\vspace{-4pt}
\begin{align}
  g(\theta t_1 + (1-\theta)t_2) > \theta g(t_1) + (1-\theta)g(t_2).
\end{align}
To this end, we first note that for any feasible points $t_1$ and $t_2$ ($t_1\neq t_2$), there exist $\yb_1\in\Tset(t_1)$ and $\yb_2\in\Tset(t_2)$ with\vspace{-4pt}
\begin{align}
  g(t_1)=\bar g(t_1,\yb_1), ~ g(t_2) = \bar g(t_2,\yb_2).
\end{align}
Note that $\yb_j\in\Tset(t_j)$ implies  $\yb_j\in [{\bf 0},{\bf 1}]$ and ${\bf 1}^T\yb_j\leq 1- t_j$ for $j=1,2$. Hence, for any $\theta\in[0,\,1]$, the point $\theta \yb_1 + (1-\theta)\yb_2$ satisfies\vspace{-4pt}
\begin{subequations}
  \begin{align}
    \theta \yb_1 + (1-\theta)\yb_2&\in \theta [{\bf 0},{\bf 1}] + (1-\theta)[{\bf 0},{\bf 1}] = [{\bf 0},{\bf 1}],\\[-2pt]\!\!\!\!\!
    {\bf 1}^T(\theta \yb_1 + (1-\theta)\yb_2) &= \theta {\bf 1}^T\yb_1 + (1-\theta){\bf 1}^T\yb_2 \\[-2pt]
    &\leq \theta (1-t_1) + (1-\theta)(1-t_2),
  \end{align}
\end{subequations}
which shows that the point $\theta \yb_1 + (1-\theta)\yb_2 \in\Tset\triangleq \Tset(\theta t_1 + (1-\theta)t_2)$.
Therefore, we have\vspace{-4pt}
\begin{subequations}
  \begin{align}\!\!\!\!\!
    g(\theta t_1 + (1-\theta)t_2)
    & = \max_{\yb\in\Tset}~\bar g(\theta t_1 + (1-\theta)t_2,~\yb)\\[-2pt]
    &\geq \bar g\left(\theta t_1 + (1-\theta)t_2,\,\theta \yb_1 + (1-\theta)\yb_2\right)\\[-2pt]
    &> \theta \bar g\left( t_1,\yb_1\right) + (1-\theta)\bar g\left( t_2,\yb_2\right)\\[-2pt]
    &= \theta g(t_1) + (1-\theta)g(t_2),
  \end{align}
\end{subequations}
where the second inequality is due to the strict concavity of $\bar g(\tau_0,\taub)$. This establishes the strict concavity of $g(\tau_0)$.
\end{proof}}

\bibliographystyle{IEEEtran}

\end{document}